\documentclass[reprint, 10pt, floatfix,aps,prd,twocolumn,a4paper,superscriptaddress,nofootinbib,notitlepage]{revtex4-1}
\makeatletter
\renewcommand\thefigure{\@arabic\c@figure}
\renewcommand\fnum@figure{\figurename~\thefigure}
\newcommand\addfigref[1]{\renewcommand\fnum@figure{\figurename~\thefigure~#1}}
\newcommand\nofigref{\renewcommand\fnum@figure{\figurename~\thefigure}}
\makeatother

\usepackage[utf8]{inputenc}
\usepackage[T1]{fontenc} 
\usepackage{paralist}
\usepackage[breaklinks]{hyperref}
\usepackage[english]{babel}
\usepackage{amsopn,amsthm,dsfont,amssymb}
\usepackage{mathrsfs}
\usepackage{cancel} 
\usepackage{soul} % sl etc.
\usepackage[normalem]{ulem}

\usepackage{mathtools}
\mathtoolsset{showonlyrefs=true}
\usepackage{tikz}
\usetikzlibrary{positioning,calc,arrows,decorations.pathmorphing,shapes}
\usepackage{adjustbox}
\usepackage[caption=false]{subfig}

\newcommand{\ie}{{\it i.e.}}

\newtheorem{theorem}{Theorem}
\newtheorem{lemma}{Lemma}
\newtheorem{corollary}{Corollary}
\newtheorem{definition}{Definition}

\newtheorem*{blank*}{}
\newcommand{\source}{\circledR}
\newcommand{\sink}{\circledS}
\newcommand{\mouth}[2]{
	\foreach\i in {0,0.2,...,1}{
		\draw[fill=black,opacity=\i*0.4] #1 ellipse ({(1-\i)*0.4cm} and {(1-\i)*0.2cm}) node (#2) {};
	}
}

\begin{document}
\title{Reversible time travel with freedom of choice}
\author{\"Amin Baumeler}
\affiliation{Institute for Quantum Optics and Quantum Information (IQOQI), Austrian Academy of Sciences, Boltzmanngasse 3, Vienna A-1090, Austria}
\affiliation{Facolt\`{a} indipendente di Gandria, Lunga scala, 6978 Gandria, Switzerland}
\author{Fabio Costa}
\affiliation{Centre for Engineered Quantum Systems, School of Mathematics and Physics,\\The University of Queensland, St.\ Lucia, QLD 4072, Australia}
\author{Timothy C.\ Ralph}
\affiliation{Centre for Quantum Computation and Communication Technology, School of Mathematics and Physics,\\The University of Queensland, St.\ Lucia, QLD 4072, Australia}
\author{Stefan Wolf}
\affiliation{Faculty of Informatics, Universit\`{a} della Svizzera italiana, Via G.\ Buffi 13, 6900 Lugano, Switzerland}
\affiliation{Facolt\`{a} indipendente di Gandria, Lunga scala, 6978 Gandria, Switzerland}
\author{Magdalena Zych}
\affiliation{Centre for Engineered Quantum Systems, School of Mathematics and Physics,\\The University of Queensland, St.\ Lucia, QLD 4072, Australia}
\begin{abstract}
	General relativity allows for the existence of closed time-like curves, along which a material object could travel back in time and interact with its past self. This possibility raises the question whether certain initial conditions, or more generally local operations, lead to inconsistencies and should thus be forbidden.
	Here we consider the most general deterministic dynamics connecting classical degrees of freedom defined on a set of bounded space-time regions, requiring that it is compatible with {\em arbitrary\/} operations performed in the local regions. We find that any such dynamics can be realised through reversible interactions. We further find that consistency with local operations is compatible with non-trivial time travel: Three parties can interact in such a way to be all both {\em in the future and in the past\/} of each other, while being {\em free\/} to perform arbitrary local operations.
\end{abstract}

\maketitle

\section{Introduction}

One of the most baffling aspects of general relativity is that certain solutions to the Einstein equations contain closed time-like curves (CTCs)~\cite{Lanczos:1924kn, Godel:1949eb, Taub:1951, NUT:1963, Kerr:1963, Tipler:1974iw, Griffiths:2009exactBook}, where an event can be both in its own future and past.
Although it is not known whether CTCs are actually possible in our universe~\cite{Morris:1988gg, novikov1989analysis, Hawking1992, Visser:1995wormholesBook, Ori2005}, their mere logical possibility poses the challenge to understand what type of dynamics could be expected in their presence. 

The first systematic studies of the subject concentrated on space-time geometries where CTCs appear only in the {\em future\/} of some space-like surface~\cite{Friedman:1990ja, Echeverria:1991ko, Lossev1992} (Fig.~\ref{fig:ctcfuture}).
This makes it possible to set initial conditions on that surface, \ie, in the pre-CTCs era, and to look for the corresponding solutions to the equations of motion.  A prime case study is that of a billiard ball thrown in the direction of a wormhole: The initial position and velocity is chosen such that, if undisturbed, the ball comes out of the second mouth of the wormhole in the past and kicks its younger self off course, so the ball cannot reach the wormhole and kick itself. Since classical physics is clearly at odds with such `inconsistent' dynamics, the corresponding initial conditions would have to be `forbidden.' This, however, is itself at odds with the local nature of ordinary physical laws: What local mechanism prevents an experimenter from throwing the ball along the `impossible' trajectory?

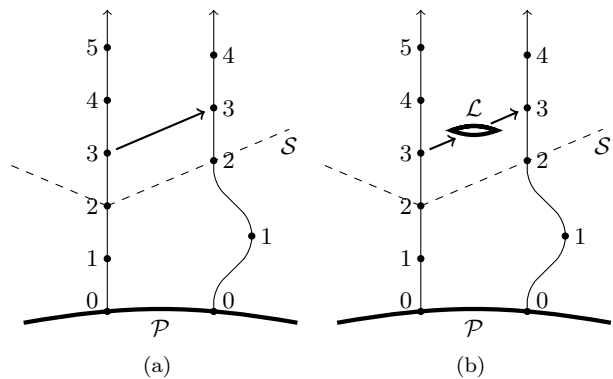
\begin{figure}
	\centering
	\subfloat[\label{fig:ctcfuture}]{
		\begin{tikzpicture}
			\node (X) {}; 
			\coordinate (C) at ($ (X.center) + (0.0,0.15) $);
			\coordinate (L) at ($ (X.center) - (1.8,0.0) $);
			\coordinate (R) at ($ (X.center) + (1.8,0.0) $);
			\coordinate (L0) at ($ (C) - (0.7,0) $);
			\draw[-,ultra thick] (L) to[out=10,in=170] node[midway,below] {$\mathcal{P}$} (R);
			\draw[->] (L0) -- ++(0,3.8);
			\draw[fill] (L0) circle (.3ex) node[left,yshift=0.15cm] {$0$};
			\foreach \x in {1,...,5}
				\draw[fill] ($ (L0) + (0,0.7*\x) $) circle (.3ex) node (M\x) { }node[left] {$\x$};
			\draw[->,rounded corners=5pt] ($ (C) + (0.7,0) $) -- ++(0,0.3) -- ++(0.5,0.5) -- ++(0,0.4) -- ++(-0.5,0.5) -- ++(0,2.1);
			\draw[fill] ($ (C) + (0.7,0) $) circle (.3ex) node (N0) {} node[right,yshift=0.15cm] {$0$};
			\draw[fill] ($ (C) + (1.2,1) $) circle (.3ex) node (N1) {} node[right] {$1$};
			\draw[fill] ($ (C) + (0.7,2) $) circle (.3ex) node (N2) {} node[right] {$2$};
			\draw[fill] ($ (C) + (0.7,2.7) $) circle (.3ex) node (N3) {} node[right] {$3$};
			\draw[fill] ($ (C) + (0.7,3.4) $) circle (.3ex) node (N4) {} node[right] {$4$};
			\draw[->,thick] (M3) -- (N3);
			\draw[-,dashed] ($ (L0) + (0,1.4) $) -- ++(23:2.6cm) node[below] {$\mathcal{S}$};
			\draw[-,dashed] ($ (L0) + (0,1.4) $) -- ++(180-23:1.4cm);
		\end{tikzpicture}
	}
	\subfloat[\label{fig:ctcnow}]{
		\begin{tikzpicture}
			\node (X) {};
			\coordinate (C) at ($ (X.center) + (0.0,0.15) $);
			\coordinate (L) at ($ (X.center) - (1.8,0.0) $);
			\coordinate (R) at ($ (X.center) + (1.8,0.0) $);
			\coordinate (L0) at ($ (C) - (0.7,0) $);
			\draw[-,ultra thick] (L) to[out=10,in=170] node[midway,below] {$\mathcal{P}$} (R);
			\draw[->] (L0) -- ++(0,3.8);
			\draw[fill] (L0) circle (.3ex) node[left,yshift=0.15cm] {$0$};
			\foreach \x in {1,...,5}
				\draw[fill] ($ (L0) + (0,0.7*\x) $) circle (.3ex) node (M\x) { }node[left] {$\x$};
			\draw[->,rounded corners=5pt] ($ (C) + (0.7,0) $) -- ++(0,0.3) -- ++(0.5,0.5) -- ++(0,0.4) -- ++(-0.5,0.5) -- ++(0,2.1);
			\draw[fill] ($ (C) + (0.7,0) $) circle (.3ex) node (N0) {} node[right,yshift=0.15cm] {$0$};
			\draw[fill] ($ (C) + (1.2,1) $) circle (.3ex) node (N1) {} node[right] {$1$};
			\draw[fill] ($ (C) + (0.7,2) $) circle (.3ex) node (N2) {} node[right] {$2$};
			\draw[fill] ($ (C) + (0.7,2.7) $) circle (.3ex) node (N3) {} node[right] {$3$};
			\draw[fill] ($ (C) + (0.7,3.4) $) circle (.3ex) node (N4) {} node[right] {$4$};

			\draw[-,dashed] ($ (L0) + (0,1.4) $) -- ++(23:2.6cm) node[below] {$\mathcal{S}$};
			\draw[-,dashed] ($ (L0) + (0,1.4) $) -- ++(180-23:1.4cm);
			\node (X2) at ($ (X.center) + (0,2.55) $) {};
			\node (Lab) at ($ (X2) + (0,0.3) $) {$\mathcal{L}$};
			\coordinate (L2) at ($ (X2.center) - (0.3,0.0) $);
			\coordinate (R2) at ($ (X2.center) + (0.3,0.0) $);
			\draw[-,ultra thick] (L2) to[out=+20,in=+160] (R2) to[out=-160,in=-20] (L2) to[out=20,in=160] (R2);
			\draw[->,thick] (M3) -- ++(23:0.50cm);
			\draw[->,thick] (M3)++(23:1cm) -- (N3);
		\end{tikzpicture}
	}
	\addfigref{\cite{Baumeler2018}}
	\caption{Wormhole space-time with closed time-like curves (CTCs)~\cite{Morris:1988gg, novikov1989analysis}. (a)~Events with equal proper times along the world lines of the two mouths of the wormhole are identified. Accelerating the right mouth produces time dilation, resulting in CTCs in the future of the surface~$\mathcal{S}$. An experimenter acting in the past of $\mathcal{S}$ should be able to prepare arbitrary initial states on a space-like surface~$\mathcal{P}$. (b)~An experimenter in a localised region~$\mathcal{L}$, which does not contain but is traversed by CTCs, should be able to perform arbitrary local operations.
}
	\label{fig:ctcs}
\end{figure}
\nofigref

The surprising result is that (possibly multiple) self-consistent solutions were found for all cases studied. The ball does not enter the wormhole undisturbed: It is kicked softly, comes out the wormhole at a slightly different angle than expected and gives its younger self just the right soft kick. Even including friction, exploding bombs, and the like, solutions for any considered initial condition were found~\cite{Novikov1992, Mikheeva1993}.

The existence of consistent solutions for every initial condition suggests a type of {\em `no new physics' principle}~\cite{Friedman:1990ja}: The presence of CTCs should not modify the local laws of physics, nor the range of possible initial states. 
It is then meaningful to ask whether the validity of the principle can be extended to the region where CTCs are {\em already\/} present~(Fig.~\ref{fig:ctcnow}). 
In such a case, there are no sufficiently regular space-like surfaces to set `global' initial conditions. Furthermore, there exist classes of space-times, aimed at simulating time machines, where geodesics are `reflected' by the CTC region, making it problematic to impose global initial conditions even in its past~\cite{Tippett2017}. However, the local nature of physical laws would imply that global features, such as the presence of CTCs, should not constrain the possible actions of an agent\footnote{We use the terms agent and party interchangeably. An agent (a party) can perform operations of her or his choice within a localized space-time region.} in any sufficiently {\em localised} region which itself does not contain CTCs.

Here we explore whether CTCs are compatible with such an extended requirement of `no new physics.'
Rather than considering a specific type of system (billiard balls, fields, {\it etc.}), we work within an abstract framework describing classical, deterministic dynamics that does not assume any particular causal structure.
It is formally a deterministic version of the formalism of `classical correlations without causal order'~\cite{Baumeler2016}, which in turn is the classical limit of the quantum `process matrix' formalism~\cite{Oreshkov:2012uh} (see also Refs.~\cite{baumeler14, araujo15, feixquantum2015, Oreshkov2015, Branciard2016, Giacomini2015, Baumann2016, oreshkov2016, abbott2016, araujo2016purification, Perinotti2016}). 
This latter formalism~\cite{Oreshkov:2012uh} can be used to study correlations among parties where the causal order among them is not specified {\it a priori}. The (often implicit) assumption of a causal order, instead, is replaced by the assumption that the probabilities of measurement outcomes are well defined. It has been shown that this formalism is more general than quantum theory: Correlations arise that cannot be explained with a predefined causal order of the parties. One can also consider the underlying state space to be classical (random variables) as opposed to quantum~\cite{Baumeler2016}. Strikingly, also in that special case such `non-causal' correlations arise.
Here, and as mentioned above, instead of dealing with quantum states or random variables, we restrict the model to {\em deterministic\/} dynamics; a limitation that is known to still allow for `non-causal' correlations~\cite{Baumeler2016}.

We prove that all classical, deterministic processes compatible with the free choice of local operations represent the evolution of classical systems via reversible dynamics in a suitable topology.
Surprisingly, this is not true in the quantum case, where certain processes are incompatible with reversible dynamics~\cite{araujo2016purification}; it is neither true if the underlying states are random variables~\cite{Baumeler2015}.
Yet, for deterministic dynamics, every process can be turned into a reversible one, as is shown later.

We further provide a complete characterisation of up to tripartite processes, including continuous-variable systems, identifying a broad class of processes that can only be realised in the presence of CTCs. Our results show that {\em CTCs in general are logically compatible with classical, deterministic dynamics}, where agents are {\em free\/} to perform any classical operation in local regions and, outside such regions, systems evolve according to classical, {\em reversible\/} dynamics.

Our approach departs from previous models of CTCs directly based on the quantum formalism~\cite{Deutsch:1991jo, Politzer1994, Pegg:2001wa, Bennett, Bacon2004, Svetlichny:2009ve,Svetlichny:2011gq, Lloyd2011exp, Lloyd:2011ir, Ralph2010, Ralph:2012cd, Wallman2012, Allen2014}.
	In those models, if the underlying systems that travel on the CTCs are classical, then the `no new physics' principle and the assumption of free choice {\em cannot\/} be uphold simultaneously.
	Rather, it was believed that one needs to invoke quantum mechanics in order to restore these desired features.
	In this article, we finally discuss the extension of the presented formalism to quantum CTCs and discuss it in the light of the above mentioned models.
	A detailed comparison among these models will be provided in a forthcoming paper~\cite{Baumeler2016b}.

\section{Approach}
It is customary to understand physical laws as rules for {\em predicting\/} future events based on initial conditions. This is often seen as an essential aspect for extracting testable predictions: One can always conceive, at least in principle, a controlled experiment where the initial conditions are set independently of any other relevant aspect of the experiment, and the final state is measured.	 
CTCs undermine this view: In a general scenario, where events can be in their own future and global space-like surfaces are not available, it is unclear what independent variables an experimenter could try to control. Solutions to the dynamics have to be determined ``all at once,'' without a clear place for external interventions.
	
Our approach to a {\em predictive framework\/} for not globally hyperbolic space-times,\footnote{A space-time is \emph{globally hyperbolic} if and only if it contains a Cauchy surface, \ie, dynamics on it has a well defined initial-condition problem \cite{Geroch1970}.} possibly containing CTCs, is to move to a more general notion of intervention, partly inspired by the methodology of classical causal models \cite{Pearlbook, woodward2003making}. The key assumption is that physical laws retain their modular character, namely it is possible to separate the physical properties relevant for the dynamics of the system of interest from those responsible for the ``intervention.'' For example, when studying the trajectory of a billiard ball, one can typically ignore the particular mechanism setting the ball in motion; or when estimating the radiation produced by moving charges, one can abstract away the force causing the charge's movement. In other words, we wish to retain a notion of ``freedom of choice'' in manipulating the relevant variables.\footnote{This assumption does not imply any metaphysical commitment regarding human ``free will.''}

To be more precise, we will identify interventions with localised operations in CTC-free regions of space-time. Following the above assumption, we assume that it is possible---at least in principle---to engineer any local operation on the system of interest without significantly affecting the dynamics outside the intervention regions. The role of the dynamics is then to predict how a system will respond to certain operations. In other words, given the specification of the operations in a set of space-time regions where intervention takes place, we should be able to calculate the state of the system observed in any other space-time region. We will call ``process'' the function providing such a prediction.

Note that a ``process,'' in general, includes both information about the dynamics as well as additional boundary conditions not set by intervention. To clarify this point, consider an ordinary, CTC-free, scenario where an intervention sets the value of some variable ({\it e.g.}, a field) on a small portion of a space-like surface and we want to estimate the resulting field somewhere in its causal future.  In general, this requires fixing additional boundary conditions, for example on an extension of the initial region to a Cauchy surface. The ``process'' is the function mapping the field values in the small region to the observed final field, once the remaining boundary conditions are fixed. In a limiting case, where no intervention is made, the ``process'' is simply a specification of an admissible boundary condition in the region where an observation is made.

According to the above scheme, a particular dynamical law generates a \emph{class} of processes, describing how interventions in arbitrary regions can influence observations in other regions, as well as all possible boundary conditions consistent with the laws. In the case of not globally hyperbolic space-times, boundary conditions might be subject to non-trivial constraints. Our working assumption is that such constraints should not affect the type of operations available in small regions that do not contain CTCs. This might at first seem at odds with `grandfather paradox'-type arguments: Denote by $x$ the value a physical property takes on the future boundary of our region, $a$ the value on the past, and let the `CTC dynamics' be the identity, $a=x$. This is incompatible with any local operation $x=f(a)$ other than identity. 

The problem with the argument above is that it simply \emph{assumes} that the identity backwards in time is a possible solution of the dynamics, based on the intuition that such evolution would be possible if $a$ were in the future of $x$, without CTCs. The studies mentioned in the introduction suggest that such an assumption is typically incorrect: The system's evolution typically finds a way to `adjust itself,' preserving the consistency of `free interventions.' The upshot is that we should not expect all conceivable functions to appear as possible solutions of the dynamics. Our approach is to take consistency with local operations as a starting point and explore the consequences: Possible dynamics and their respective predictions are calculated based on the ``freedom of choice'' of local operations. Whether this assumption is valid would depend on the particular geometry and dynamical equations. The main result is that it is logically possible, at least in principle, to have a predictably sound theory, compatible with local interventions, where the presence of CTCs can be made manifest through appropriate experiments.

\section{The formalism}
The core assumption of our model is that any classical operation that is possible in an ordinary space-time should also be possible in the presence of CTCs, as long as the operation takes place in a localised region of space-time that does not contain CTCs.
This states that localised regions are {\em ignorant\/} of CTCs. Let us elaborate more on this core assumption.

We consider $N$ non-overlapping space-time regions (henceforth {\em local regions}) which, individually, cannot be distinguished from regions in ordinary, globally hyperbolic space-time. These are the regions in which `interventions' can be made; \ie, evolution inside the regions is assumed to be arbitrary, while outside the regions it is fixed by some given dynamics and boundary conditions. We impose no restriction on the space-time in which the regions are embedded, except that it is a Lorentzian manifold fixed independently of any dynamical degree of freedom of interest. Details regarding the causal structure of space-time will not play a prominent role in our analysis, but we point to Ref.~\cite{Minguzzi2019} for a recent review.

To simplify the analysis, we restrict to compact, simply connected regions that have only space-like boundaries, which we decompose into a past boundary and a future boundary (these local regions are therefore space-time local), see Fig.~\ref{fig:localregion}.
Furthermore, and for the same purpose, we assume that, for each local region, any time-like curve that enters through the past (future) boundary exits through the future (past) boundary, and that the region contains no CTCs. These assumptions ensure that we can treat the local regions as `closed laboratories'~\cite{Oreshkov:2012uh}---where a system can only enter once through the past boundary and exit once through the future boundary, without exchanging information with the exterior in between. Under such conditions, the local operations can be simply identified with transformations from an input to an output space. The assumption of space-like boundaries has the further role of preventing, in a globally hyperbolic space-time, that a region intersects both the future and past light cone of another region. In other words, causal order among regions would form a partial-order relation. Therefore, any departure from partial order reveals a not globally hyperbolic space-time and some non-trivial form of time travel.

As we are interested in classical systems, we can assign classical state spaces $\mathcal{I}_R$ (input) and $\mathcal{O}_R$ (output) respectively to the past and future boundaries of a local region $R$. States will be denoted as $i_R\in\mathcal{I}_R$, $o_R\in\mathcal{O}_R$. {For example, in a field theory, a state would be a function on the corresponding boundary surface and the state spaces would be appropriate spaces of functions.}

A {\em deterministic local operation\/} in the local region is described by a function $f_R$ from input to output space~(Fig.~\ref{fig:localregion}). We denote by $\mathcal{D}_R:=\left\{f_R:\mathcal{I}_R \rightarrow \mathcal{O}_R\right\}$ the set of all possible operations in region $R$.
\begin{figure}
	\centering
	\subfloat[\label{fig:localregion}]{
		\begin{tikzpicture}
				\node (X) {};
				\coordinate (L) at ($ (X.center) - (1.0,0.0) $);
				\coordinate (R) at ($ (X.center) + (1.0,0.0) $);
				\path (L) to[out=+20,in=+160] node[midway] (A) {} (R);
				\path (L) to[out=-20,in=-160] node[midway] (B) {} (R);
				\draw[-latex,double] (B.center) -- node[midway,right] {$f_R$} (A.center);
				\draw[-,ultra thick] (L) to[out=+20,in=+160] node[midway,above] {$o_R$} (R)
							 to[out=-160,in=-20] node[midway,below] {$i_R$} (L) to[out=20,in=160] (R);
		\end{tikzpicture}
	}
	\quad
	\subfloat[\label{fig:source}]{
		\begin{tikzpicture}
				\node (X) {};
				\coordinate (L) at ($ (X.center) - (1.0,0.0) $);
				\coordinate (R) at ($ (X.center) + (1.0,0.0) $);
				\draw[-,ultra thick] (L) to[out=+20,in=+160] node[midway,above] {$o_R$} (R);
		\end{tikzpicture}
	}
	\quad
	\subfloat[\label{fig:sink}]{
		\begin{tikzpicture}
				\node (X) {};
				\coordinate (L) at ($ (X.center) - (1.0,0.0) $);
				\coordinate (R) at ($ (X.center) + (1.0,0.0) $);
				\draw[-,ultra thick] (L) to[out=-20,in=-160] node[midway,below] {$i_R$} (R);
		\end{tikzpicture}
	}
	\caption{(a) A deterministic local operation~$f_R$ maps the classical input state~$i_R$ from the past boundary to the classical output state~$o_R$ at the future boundary of a local region~$R$. (b)~An 'output only' region (source). (c) An 'input only' region (sink).}
	\label{fig:localoperation}
\end{figure}
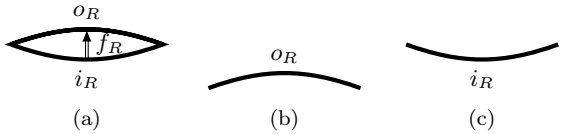
We drop the index to refer to collections of objects for all regions, as in \mbox{$i\equiv \{i_1,\dots,i_N\}$}, $\mathcal{I}\equiv\mathcal{I}_1\times\dots\times \mathcal{I}_N$, \mbox{$\mathcal{D}\equiv\mathcal{D}_1\times\dots\times \mathcal{D}_N$},~{\it etc.}\footnote{Note that $\mathcal{D}$ is not the set of {\em all\/} functions $\mathcal{I} \rightarrow \mathcal{O}$, but rather of those of the form $f(i)=\left\{f_1(i_1),\dots,f_N(i_N)\right\}$.} Local operations are not required to be reversible, \ie, the local functions $f_R$ need not be invertible. This corresponds to the assumption that the local experimenters and devices have the ability to erase information by accessing some reservoir, not included in the description of the physical degrees of freedom of interest. Furthermore, input and output state spaces need not be isomorphic, as degrees of freedom may be added or removed during the operation. We will also consider the special case in which either input or output state space is the empty set. An `output only' region---called a {\em source}---can be identified with a space-like region on which an agent (acting somewhere in its past) can prepare an arbitrary state~(Fig.~\ref{fig:source}), while an `input only' region---called a {\em sink}---can be identified with a space-like region where an agent (somewhere in its future) can only observe the state~(Fig.~\ref{fig:sink}). Ordinary dynamics is concerned with the evolution from a source (state preparation on a space-like surface) to a sink (state observation on a space-like surface).

We want to define a generalised type of dynamics for an arbitrary number of regions---in which arbitrary classical operations can be performed---possibly embedded in a space-time with CTCs. The basic requirement of such a model is that it must be able to {\em predict\/} the state observed on the past boundary of each local region,\footnote{The state on the future boundary of each local region is obtained by applying the local operation on the state observed on the past boundary of that local region.} which in general can depend on all local operations. (In a CTC-free space-time, the input state on a space-like region would only depend on operations in its past light-cone). For a deterministic model, such a dependence is encoded in a function $\omega\equiv\{\omega_1,\dots,\omega_N\}:\mathcal{D}\rightarrow \mathcal{I}$ that determines the state on the past boundary of each region as a function of all local operations. 

We define consistency with arbitrary choices of local operations in the following way: Given a set of operations $f\in \mathcal{D}$, let $i_R=\omega_R(f)$ be the input state of region $R$. This is transformed into the output $o_R=f_R(i_R)$ by the local operation in that region. However, there are several different functions that yield the same output $o_R$ given the input $i_R$. Since the experimenter is free to choose any operation, and it should not matter when that choice is made (in particular, it can be made {\em after\/} she already knows the input $i_R$), the external dynamics should not distinguish between all such operations. In particular, let us define the constant operation $C_o(i)=o$ yielding outcome $o$ for any input $i$. Consistency with arbitrary local operations is then formulated as
\begin{equation}
\forall f\in \mathcal{D},\quad \omega(f) = \omega\left(C_{f(\omega(f))}\right)\,.
\label{consistency}
\end{equation} 
In words, the input states $i=\omega(f)$ produced by the process given local operations $f$ should be the same as the states produced when all parties perform the constant operations that yield the same output states as the original operations $f$ applied to $\omega(f)$.

We can simplify the representation of a process $\omega$ by noticing that it defines a unique function $w:\mathcal{O}\rightarrow \mathcal{I}$, $w(o):=\omega(C_o)$, which we will call \emph{process function}. This provides an alternative view for defining a process as a function that maps the states in the future boundary of the local regions to states in the past boundaries. This is indeed what we expect for local dynamical equations relating states at different points in space-time.

By writing $\omega(f)=i$, we see that condition~\eqref{consistency} implies the following condition for~$w$:
\begin{align}
	\forall f\in \mathcal{D},\, \exists i\in \mathcal{I} \quad \text{such that}\quad w\circ f (i)  = i\,.
	\label{fixedpoint}
\end{align}
In other words, if $w$ is a process function, then $w\circ f$ has a fixed point for every local operation $f$.

Condition~\eqref{fixedpoint} is in fact both necessary and sufficient: A function $w$ satisfying~\eqref{fixedpoint} {\em uniquely\/} defines a process. This is because of the uniqueness of the fixed points:
\begin{theorem}[Unique fixed points]
	\label{fixedthm}
Given a function $w:\mathcal{O}\rightarrow \mathcal{I}$ that satisfies condition~\eqref{fixedpoint}, the fixed point of $w\circ f$ is unique for every set of local operations \mbox{$f=\{f_1,\dots,f_N\}\in \mathcal{D}$}.
\end{theorem}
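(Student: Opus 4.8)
The plan is to argue by contradiction and by induction on the number $N$ of local regions. Suppose some $f\in\mathcal{D}$ is such that $w\circ f$ has two distinct fixed points $a\neq b$, i.e.\ $w(f(a))=a$ and $w(f(b))=b$. The goal is to build, out of $w,f,a,b$, a new operation $g\in\mathcal{D}$ for which $w\circ g$ has \emph{no} fixed point, contradicting the hypothesis~\eqref{fixedpoint}. Throughout I write $x_{-N}$ for the tuple of all components of $x$ except the $N$-th, $f_{-N}$ for $(f_1,\dots,f_{N-1})$, and $w(\cdot)_R$ for the $R$-th component of the output of $w$. Note first that $o:=f(a)$ and $o':=f(b)$ satisfy $w(o)=a$, $w(o')=b$, and $o\neq o'$ (else $a=w(o)=w(o')=b$).

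For the base case $N=1$, $\mathcal{D}$ is just the set of all functions $\mathcal{I}\to\mathcal{O}$, so I may take $g$ defined by $g(a):=o'$ and $g(x):=o$ for every $x\neq a$. Then $w(g(a))=b\neq a$, and $w(g(x))=w(o)=a\neq x$ for every $x\neq a$, so $w\circ g$ has no fixed point, contradicting~\eqref{fixedpoint}.

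For the inductive step I assume the statement for $N-1$ regions. The key device is the \emph{slice}: for $c\in\mathcal{O}_N$ define $w^{(c)}:\mathcal{O}_1\times\cdots\times\mathcal{O}_{N-1}\to\mathcal{I}_1\times\cdots\times\mathcal{I}_{N-1}$ by $w^{(c)}(o_{-N}):=w(o_{-N},c)_{-N}$. Extending an arbitrary $(N-1)$-region operation $h$ by the constant operation $c$ in region $N$ and applying~\eqref{fixedpoint} to $w$ shows that $w^{(c)}$ again satisfies~\eqref{fixedpoint}; hence, by the inductive hypothesis, $w^{(c)}\circ h$ has a \emph{unique} fixed point for every $(N-1)$-region operation $h$. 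Now put $o_N:=f_N(a_N)$, $o'_N:=f_N(b_N)$; a one-line check gives that $a_{-N}$ is a fixed point of $w^{(o_N)}\circ f_{-N}$ and $b_{-N}$ is a fixed point of $w^{(o'_N)}\circ f_{-N}$ (because $(f_{-N}(a_{-N}),o_N)=f(a)$ and $(f_{-N}(b_{-N}),o'_N)=f(b)$). If $a_N=b_N$, then $o_N=o'_N$, so $a_{-N}$ and $b_{-N}$ are both \emph{the} fixed point of the single function $w^{(o_N)}\circ f_{-N}$, forcing $a_{-N}=b_{-N}$ and then $a_N=w(f(a))_N=w(f(b))_N=b_N$, i.e.\ $a=b$, a contradiction. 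If $a_N\neq b_N$, I take $g:=(f_1,\dots,f_{N-1},g_N)\in\mathcal{D}$ with $g_N(a_N):=o'_N$ and $g_N(x_N):=o_N$ for $x_N\neq a_N$, and check that $w\circ g$ has no fixed point: for a hypothetical fixed point $x$ with $x_N\neq a_N$ one has $g(x)=(f_{-N}(x_{-N}),o_N)$, forcing $x_{-N}$ to be the unique fixed point $a_{-N}$ of $w^{(o_N)}\circ f_{-N}$ and hence $x=w(f(a))=a$, contradicting $x_N\neq a_N$; for $x$ with $x_N=a_N$ one has $g(x)=(f_{-N}(x_{-N}),o'_N)$, forcing $x_{-N}=b_{-N}$ and hence $x=w(f(b))=b$, contradicting $x_N=a_N\neq b_N$. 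Either way~\eqref{fixedpoint} is violated.

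The step I expect to be the main obstacle — and the reason the argument must be inductive rather than a single explicit construction — is handling the product structure of $\mathcal{D}$: one would like to perturb $f$ ``at $a$'' in all regions at once, but since operations are products $g=(g_1,\dots,g_N)$, the range of such a $g$ is a combinatorial box, and $w$ is completely unconstrained at the ``mixed'' corners of that box, so spurious fixed points of $w\circ g$ located there cannot be excluded. Fixing one region's operation to a constant collapses one dimension of the box, and the inductive uniqueness on the slice $w^{(c)}$ is exactly what pins down the remaining components. The only genuinely routine parts left are the verification that $w^{(c)}$ inherits property~\eqref{fixedpoint} and the identifications of $a_{-N},b_{-N}$ as fixed points of the appropriate sliced compositions.
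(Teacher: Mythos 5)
Your proof is correct, and while it shares the paper's overall skeleton---induction on the number of regions, with the single-region case settled by an adversarial ``swap the images'' construction---the inductive step runs along a genuinely different route. The paper first proves that each component $w_R$ cannot depend on that region's own output (Lemma~\ref{constant}, whose proof is essentially your base-case construction applied componentwise); this is what makes it possible to define the reduced function of Eq.~\eqref{reduced} with respect to the \emph{actual} operation $f_N$, and then Lemma~\ref{reducedpoint} together with point (i) of Lemma~\ref{lemma} shows that two fixed points of $w\circ f$ differing (without loss of generality) in a component other than $N$ project to two distinct fixed points of the $(N-1)$-region process $w^{f_N}\circ f_{\setminus N}$, an immediate contradiction with the induction hypothesis. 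You instead slice only along \emph{constant} operations $c\in\mathcal{O}_N$ (the special case of the reduced function that needs no analogue of Lemma~\ref{constant} to be well defined), and you pay for this economy with a case split: for $a_N=b_N$ slice uniqueness gives $a=b$ outright, while for $a_N\neq b_N$ you must build a second adversarial operation $g_N$ and invoke slice uniqueness to exclude spurious fixed points of $w\circ g$---so the perturbation trick appears both in your base case and inside your inductive step, whereas the paper confines it to Lemma~\ref{constant}. The trade-off is clear: your argument is more elementary and self-contained, needing none of the appendix lemmas, while the paper's route factors the work into reusable pieces (self-independence of components, reduced processes) that it exploits again in the proof of Theorem~\ref{3charact}; your closing observation about constant slices collapsing one dimension of the ``combinatorial box'' of product operations is precisely the point where the two proofs diverge.
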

We prove this theorem in the Appendix. As opposed to the proof of the analogous theorem in the probabilistic version of the formalism~\cite{Baumeler2015}, our proof also holds for continuous and not only discrete variables~\cite{askus}.

Because of Theorem~\ref{fixedthm}, every function $w:\mathcal{O}\rightarrow \mathcal{I}$ that satisfies condition~\eqref{fixedpoint} defines a unique function $\omega:\mathcal{D}\rightarrow \mathcal{I}$, with $\omega(f)$ equal to the unique fixed point of $w\circ f$. It is furthermore easy to see that condition~\eqref{fixedpoint} implies the consistency condition~\eqref{consistency}. Therefore, we can identify a process with its process function $w$. The interpretation is that dynamics in the presence of CTCs is described by a function that maps the states on the future boundaries of all regions to states on the past boundaries of each region. Condition~\eqref{fixedpoint} imposes that such a dynamics is compatible with arbitrary operations in each region; Theorem~\ref{fixedthm} further guarantees that specifying the operations performed in each region is sufficient to predict a {\em unique\/} state on each of the past boundaries.

\section{Reversibility}
Reversible classical dynamics is associated with invertible functions, such that the role of `preparation' and `measurement' can be swapped. Not all process functions are invertible; for example, the process function for a single `sink' region (with trivial output) reduces to the specification of a state on that region and it is clearly not invertible. However, such a process function can be extended to a reversible one by introducing a `source' region (with trivial input), in the past of the sink, so that the state on the sink can now be calculated as a function of the state prepared by the source, and this function can be invertible.

Crucially, we can prove that {\em every\/} process function can be extended to an invertible one, as expressed by the following theorem, proved in the Appendix.

\begin{theorem}[Reversibility]
	\label{thm:go}
	For every function \mbox{$w:\mathcal{O}\rightarrow \mathcal{I}$} that satisfies condition~\eqref{fixedpoint}, there exists an invertible function $w':\mathcal{O}\times\mathcal{O}_\source \rightarrow \mathcal{I}\times\mathcal{I}_\sink$, where $\mathcal{O}_\source$ is the output space of a region with trivial input (the {\em `source'}) and $\mathcal{I}_\sink$ is the input space of a region with trivial output (the {\em `sink'}), such that $w'$ satisfies condition~\eqref{fixedpoint} and there exists $\tilde{o}_\source\in \mathcal{O}_\source$ such that $w'\left(o,\tilde{o}_\source\right)=\left\{w(o),g_\sink(o)\right\}$ for some invertible function $g_\sink$.
\end{theorem}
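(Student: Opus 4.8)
The plan is to make $w$ invertible by padding it with a source that injects a free group element and a sink that merely records the outputs of the original regions. Fix on each input space $\mathcal{I}_r$ an Abelian group structure — any set admits one — so that $\mathcal{I}=\mathcal{I}_1\times\cdots\times\mathcal{I}_N$ is an Abelian group under componentwise addition, written additively; translation by a fixed $a\in\mathcal{I}$ is then the product map $\{i_1,\dots,i_N\}\mapsto\{i_1+a_1,\dots,i_N+a_N\}$. Take the source output space $\mathcal{O}_\source:=\mathcal{I}$ and the sink input space $\mathcal{I}_\sink:=\mathcal{O}$, and set
\begin{equation}
 w'(o,o_\source):=\{\,o_\source+w(o)\,,\,o\,\}\,.
\end{equation}
This is manifestly a bijection $\mathcal{O}\times\mathcal{O}_\source\to\mathcal{I}\times\mathcal{I}_\sink$, with inverse $\{i,i_\sink\}\mapsto\{i_\sink,\,i-w(i_\sink)\}$. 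Choosing $\tilde o_\source$ to be the group identity $0$ gives $w'(o,0)=\{w(o),o\}$, of the required form with $g_\sink=\mathrm{id}_\mathcal{O}$, which is invertible. Thus the only real work is to verify that $w'$ satisfies condition~\eqref{fixedpoint}.

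To that end I would first unfold~\eqref{fixedpoint} for the enlarged family of regions. An operation there is a triple $\{f,f_\source,f_\sink\}$ with $f\in\mathcal{D}$, where $f_\source$ (trivial input) just picks a value $o_\source\in\mathcal{O}_\source$ and $f_\sink$ (trivial output) is the unique map onto the singleton. Composing, $w'$ acts on the loop variables as $\{i,i_\sink\}\mapsto\{o_\source+w(f(i)),\,f(i)\}$, which is independent of $i_\sink$; hence a fixed point exists iff some $i$ solves $o_\source+w(f(i))=i$, and one then reads off $i_\sink=f(i)$. Equivalently, condition~\eqref{fixedpoint} for $w'$ holds iff, for every $o_\source\in\mathcal{O}_\source$, the ``slice'' $w'_{o_\source}:=o_\source+w(\,\cdot\,):\mathcal{O}\to\mathcal{I}$ satisfies~\eqref{fixedpoint}; so it suffices to show each slice is a process function.

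Here the group structure does the work: given $f\in\mathcal{D}$, substituting $i=i'+o_\source$ turns $o_\source+w(f(i))=i$ into $w\big(f(i'+o_\source)\big)=i'$, i.e.\ $w\circ g\,(i')=i'$ with $g:=f\circ(\,\cdot\,+o_\source)$. Since adding the fixed element $o_\source$ acts independently on each region, $g$ is again a product of per-region maps, so $g\in\mathcal{D}$; because $w$ satisfies~\eqref{fixedpoint} such an $i'$ exists, and then $i=i'+o_\source$ solves the original equation. Hence every $w'_{o_\source}$ — in particular $w'_{0}=w$ — satisfies~\eqref{fixedpoint}, and therefore so does $w'$.

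I expect the only delicate point to be the reduction in the second paragraph: one must notice that the sink supplies a purely ``write-only'' coordinate that never re-enters the consistency loop, so that condition~\eqref{fixedpoint} for $w'$ decouples into the same condition for each slice $w'_{o_\source}$, after which a single translation argument handles all slices uniformly. Everything used is merely the existence of an Abelian group of the appropriate cardinality on each $\mathcal{I}_r$, so the argument applies verbatim to continuous variables; the degenerate cases in which some state space is empty make the statement vacuous or trivial and may be set aside.
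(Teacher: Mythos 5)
Your proposal is correct and essentially reproduces the paper's proof: the same extension $w'(o,e)=\left\{w(o)+e,\,o\right\}$ with $\tilde{o}_\source=0$ and $g_\sink=\mathrm{id}$, and the same key step of absorbing the source's translation into the local operations (your $g=f\circ(\,\cdot\,+o_\source)$ is exactly the paper's $f_R\circ T^{e_R}_R$), so that a fixed point of $w$ composed with this modified operation yields the required fixed point of $w'$. The only cosmetic difference is that you verify condition~\eqref{fixedpoint} through input fixed points and a change of variables, whereas the paper phrases the same check via output fixed points.
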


This theorem shows that all process functions can be interpreted in terms of reversible dynamics: The source describes a space-like region `in the past' of all other regions, while the sink is a space-like region `in the future.'
The process determines the state of the sink as well as the states on the past boundaries of all regions as a function of the states on the future boundaries of all regions and of the source. Because it is reversible, the process can be read in the opposite direction: Given the states on the sink and on the past boundaries, it allows calculating the state on all future boundaries and of the source. The time-reversed process is then compatible with arbitrary reversed local operations that map local outputs to local inputs. In Ref.~\cite{Baumeler2016b}, it is further proven that the presence of a source and a sink is necessary in order to define a reversible process.

\section{Characterization of process functions}

The causal relations encoded by process functions can be better understood in terms of \emph{signalling}. In general, given a function $h:\mathcal{A}\times\mathcal{B}\rightarrow\mathcal{C}$, we say \emph{there is no signalling} from $\mathcal{A}$ to $\mathcal{C}$ if, for every $b\in\mathcal{B}$, 
\begin{equation}
h(a,b) = h(\tilde a,b) \quad \forall\; a, \tilde a \in \mathcal{A}.
\label{nosignalling}
\end{equation}
We say there is signalling if the opposite is true, \ie, $h(a,b) \neq h(\tilde a,b)$ for some $a$, $\tilde a$, $b$. The terminology applies to process functions in the obvious way: A region $S$ does not signal to a region $R$ if the $R$-th component of the process function does not depend on the output of $S$.

Simple examples of process functions are causally ordered ones, namely those compatible with CTC-free dynamics, for which signalling is only possible in one direction.\footnote{Formally, a set of regions is causally ordered if their causal relations in space-time form a partial order: Any region $R$ is either in the causal past, causal future or space-like from any other region $S$. A process function is compatible with such a structure if signalling is only possible from a region to its causal future. We call such a process function causally ordered.} For example, for regions $R,S,T,\dots$ a process function $w\equiv \{w_R,w_S,w_T,\dots\}$ compatible with the causal order $R\prec S \prec T\prec\dots$ is given by \mbox{$w_R(o_R,o_S)=\bar{i}_R$} (constant), \mbox{$w_S(o_R,o_S) = w_S(o_R)$}, \mbox{$w_T(o_R,o_S,o_T) = w_T(o_R,o_S)$},~{\it etc.}, where the state on each past boundary is independent of future states.\footnote{For the sake of presentation, equations like~$w_S(o_R,o_S) = w_S(o_R)$ express the independence of the function value~$w_S(o_R,o_S)$ from~$o_S$, \ie,~$\forall o_R,o_S,\tilde o_S: w_S(o_R,o_S) = w_S(o_R,\tilde o_S)$.} It is easy to see that condition~\eqref{fixedpoint} is satisfied in such cases, \ie, a fixed point exists for every choice of local operations (it is given by $i_R=\bar{i}_R$, $i_S=w_S\circ f_R(\bar{i}_R)$, and so on). The interesting question is whether more general processes are possible, once CTCs are allowed.
To answer this question, we will first give a complete characterisation of all process functions for up to three regions. The detailed proofs can be found in the Appendix.

For a single local region, a process function has to be a constant: $w(o)=\bar{i}$ $\forall o$.
Thus, an observer acting in a localised region cannot send information back to herself; her observations are fully compatible with her region being embedded in a CTC-free space-time.
A direct consequence is that, for an arbitrary number of regions, the input of each region $R$ cannot depend on that region's output:
$	w_R(o)=w_R(o_{\setminus R})$, 
where $o_{\setminus R}$ is the set of outputs of all regions except $R$.

Bipartite process functions are characterized by the following conditions:
\begin{enumerate}
	\item[(i)] $w_R(o_R,o_S)=w_R(o_S)\,$,
	\item[(ii)] $w_S(o_R,o_S)=w_S(o_R)\,$,
	\item[(iii)] at least one of~$w_R(o_S)$ or~$w_S(o_R)$ is constant.
\end{enumerate}
In other words, deterministic process functions can only allow one-way signaling. Again, two observers in distinct localised regions would not be able to verify the presence of CTCs outside their regions. (Remarkably, this is not true for the quantum version of the framework~\cite{Oreshkov:2012uh}.)

Consider now three regions $R$, $S$, $T$. For simplicity, we denote input and output variables as $a\in \mathcal{A}$, $b\in \mathcal{B}$, $c\in \mathcal{C}$ and $x\in \mathcal{X}$, $y\in \mathcal{Y}$, $z\in \mathcal{Z}$, respectively. A process function has then three component functions: $a=w_R(y,z)$, $b=w_S(x,z)$, $c=w_T(x,y)$ (where we used the fact that the input of each region cannot depend on its own output, as seen above). We give a simple characterization of process functions as functions where the output variable of one region `switches' the direction of causal influence between the two other parties.
\begin{theorem}[Tripartite process function]\label{3charact}
	Three functions $w_R:\mathcal{Y}\times \mathcal{Z}\rightarrow \mathcal{A}$, $w_S:\mathcal{X}\times \mathcal{Z}\rightarrow \mathcal{B}$, $w_T:\mathcal{X}\times \mathcal{Y}\rightarrow \mathcal{C}$ define a process function if and only if each of the three ``reduced functions'' defined as
	\begin{align}
	w^z(x,y):=&\left\{w_R(y,z), w_S(x,z)\right\},\\
	w^x(y,z):=&\left\{w_S(x,z), w_T(x, y)\right\},\\
	w^y(x,z):=&\left\{w_R(y,z), w_T(x,y)\right\}
	\end{align}
is a bipartite process function for every $z\in \mathcal{Z}$, $x\in \mathcal{X}$, $y\in \mathcal{Y}$ respectively.
\end{theorem}
Recall that a bipartite process function is at most one-way signaling. Theorem \ref{3charact} thus says that $w$ is a tripartite process function if and only if, for every fixed value for the outcome of one of the regions, only one-way signaling is possible between the other two. It is an open question whether a similar condition characterises arbitrary multipartite processes.

\section{Examples}

Given the above characterisation, it is simple to find process functions that cannot arise in ordinary, causal space time: It is sufficient that each party can signal non-trivially to the other two, while satisfying the condition of Theorem \ref{3charact}. We present a continuous-variable example, based on a similar process for `bits,' first found by Ara\'ujo and Feix and published in Ref.~\cite{Baumeler2016}. Consider a tripartite scenario as above, where $x,y,z,a,b,c \in \mathbb{R}$. We define $w:\ \mathbb{R}^3 \rightarrow \mathbb{R}^3$ as $(x,y,z) \mapsto (a,b,c)$, with
\begin{equation}
\label{cont}
	a= \Theta(-y) \Theta(z)\,, \quad
	b = \Theta(-z) \Theta(x)\,, \quad
	c= \Theta(-x) \Theta(y) 
	\,,
\end{equation}
where $\Theta(t)=1$ for $t>0$, $\Theta(t)=0$ for $t\leq 0$. In this process, the sign of the output of each region determines the direction of signaling between the other two. For example, for $y\leq 0$ we have $a=\Theta(z)$ ($T$ can signal to~$R$) but $c=0$ ($R$ cannot signal to $T$), while for $y>0$ the opposite direction of signaling holds (and similarly for the other pairs of regions). 

By Theorem~\ref{thm:go}, we can extend~$w$ to a {\em reversible\/} process function~$w'$. To this end, we introduce source and sink spaces, both isomorphic to $\mathbb{R}^3$, with variables $e_0, e_1, e_2$ and $s_0, s_1, s_2$, respectively. The extended process function $w':\ \mathbb{R}^6 \rightarrow \mathbb{R}^6$ is given by
\begin{align}
	\label{eq:ctc}
	&a= \Theta(-y) \Theta(z) + e_0\,,\qquad s_0= x\,, \\ \nonumber
	&b= \Theta(-z) \Theta(x) + e_1\,,\qquad s_1= y\,,\\ \nonumber
	&c= \Theta(-x) \Theta(y) +e_2\,, \qquad s_2= z\,.
\end{align}
This process allows three observers in regions $R,S,T$ to perform arbitrary deterministic operations on the system they receive from the respective past boundary, sending the result out the respective future boundary. The outgoing systems then enter the CTC region and undergo some reversible transformation, interacting with each other and with the output of the source~$\source$, eventually determining the state in the past of each region and of the sink~$\sink$~(Fig.~\ref{fig:example}).
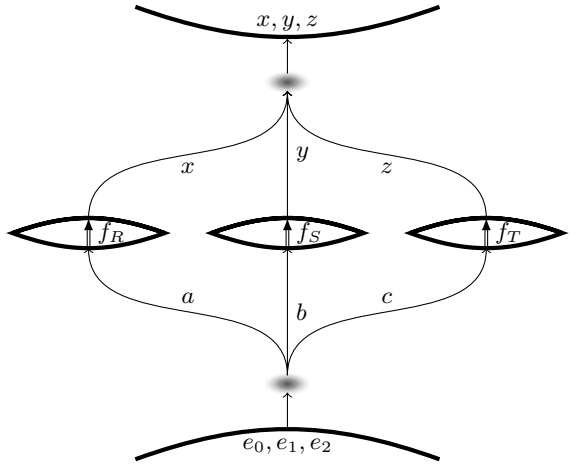
\begin{figure}
        \centering
        \begin{tikzpicture}
		\node (R) {};
		\coordinate (RL) at ($ (R.center) - (1.0,0.0) $);
		\coordinate (RR) at ($ (R.center) + (1.0,0.0) $);
		\path (RL) to[out=+20,in=+160] node[midway] (RA) {} (RR);
		\path (RL) to[out=-20,in=-160] node[midway] (RB) {} (RR);
		\draw[-latex,double] (RB.center) -- node[midway,right] {$f_R$} (RA.center);
		\draw[-,ultra thick] (RL) to[out=20,  in=160] (RR)
					  to[out=-160,in=-20] (RL)
					  to[out=20,  in=160] (RR);
		\node[right=2.5cm of R.center] (S) {};
		\coordinate (SL) at ($ (S.center) - (1.0,0.0) $);
		\coordinate (SR) at ($ (S.center) + (1.0,0.0) $);
		\path (SL) to[out=+20,in=+160] node[midway] (SA) {} (SR);
		\path (SL) to[out=-20,in=-160] node[midway] (SB) {} (SR);
		\draw[-latex,double] (SB.center) -- node[midway,right] {$f_S$} (SA.center);
		\draw[-,ultra thick] (SL) to[out=20,  in=160] (SR)
					  to[out=-160,in=-20] (SL)
					  to[out=20,  in=160] (SR);
		\node[right=2.5cm of S.center] (T) {};
		\coordinate (TL) at ($ (T.center) - (1.0,0.0) $);
		\coordinate (TR) at ($ (T.center) + (1.0,0.0) $);
		\path (TL) to[out=+20,in=+160] node[midway] (TA) {} (TR);
		\path (TL) to[out=-20,in=-160] node[midway] (TB) {} (TR);
		\draw[-latex,double] (TB.center) -- node[midway,right] {$f_T$} (TA.center);
		\draw[-,ultra thick] (TL) to[out=20,  in=160] (TR)
					  to[out=-160,in=-20] (TL)
					  to[out=20,  in=160] (TR);
		\mouth{(S)++(0.0,+2)}{F}
		\mouth{(S)++(0.0,-2)}{P}
		\coordinate (sink)   at ($ (F.center) + (0,1.0) $);
		\coordinate (sinkL)  at ($ (sink.center) - (2.0,0.0) $);
		\coordinate (sinkR)  at ($ (sink.center) + (2.0,0.0) $);
		\draw[-,ultra thick] (sinkL) to[out=-20,in=-160] (sinkR);
		\coordinate (source)   at ($ (P.center) - (0,1.0) $);
		\coordinate (sourceL)  at ($ (source.center) - (2.0,0.0) $);
		\coordinate (sourceR)  at ($ (source.center) + (2.0,0.0) $);
		\draw[-,ultra thick] (sourceL) to[out=+20,in=+160] (sourceR);
		\draw[->] (RA.center) to[out=90,in=270] node[midway,below] {$x$} (F.south);
		\draw[->] (SA.center) to[out=90,in=270] node[midway,right] {$y$} (F.south);
		\draw[->] (TA.center) to[out=90,in=270] node[midway,below] {$z$} (F.south);
		\draw[->] (P.north) to[out=90,in=270] node[midway,above] {$a$} (RB.center);
		\draw[->] (P.north) to[out=90,in=270] node[midway,right] {$b$} (SB.center);
		\draw[->] (P.north) to[out=90,in=270] node[midway,above] {$c$} (TB.center);
		\draw[->] (F.north) -- ++(0,+0.46) node[above] {$x,y,z$};
		\draw[<-] (P.south) -- ++(0,-0.5) node[below] {$e_0,e_1,e_2$};
		\draw[->,dashed] (F.330) to[out=330,in=30] (P.30);
	\end{tikzpicture}
	\caption{The output of three local regions fall into a CTC where they undergo a joint interaction with the state prepared by the source. The CTC outputs the input states to the three local regions and the sink.}
	\label{fig:example}
\end{figure}
Crucially, the input state of each region depends non-trivially on the output state of the other two, thus each observer can communicate to every other. Thus, we have a situation where three observers can experimentally verify to be each {\em both in the past and in the future\/} of each other, they can perform {\em arbitrary\/} local operations, and {\em no contradiction\/} ever emerges.

\section{Quantum closed time-like curves}
The above framework of classical, reversible dynamics can be extended to quantum systems. It is then interesting to compare the resulting model with existing quantum models for CTCs. We briefly present here the main results, and refer to Ref.~\cite{Baumeler2016b} for a detailed analysis. 

A classical system can be `quantised' by associating to each state a distinct orthogonal state in a Hilbert space, with quantum superpositions represented by linear combinations. Thus, in the quantum version of the formalism, the boundary of each region is associated with a Hilbert space. A classical, reversible process defines a permutation of basis elements and can be extended by linearity to the entire Hilbert space, defining a unitary map from the future to the past boundaries. It is not {\it a priori\/} guaranteed that such a unitary defines a valid {\em quantum process}: Observers in the local regions should now be able to perform arbitrary {\em quantum\/} operations. The resulting constraints on quantum processes can be conveniently formalised using the {\em process matrix\/} formalism of Ref.~\cite{Oreshkov:2012uh}.
Using the characterisation of tripartite quantum processes of Ref.~\cite{araujo15}, it is proven in Ref.~\cite{Baumeler2016b} that the quantisation of a finite-dimensional version of~Eq.~\eqref{cont} indeed defines a {\em valid\/} unitary quantum process.

The two most studied models of quantum systems in the presence of CTCs are the so-called post-selected CTC model (P-CTC)~\cite{Politzer1994, Pegg:2001wa,Bennett,Svetlichny:2009ve,Svetlichny:2011gq,Lloyd:2011ir} and the Deutsch model~\mbox{(D-CTC)}~\cite{Deutsch:1991jo, Bacon2004, Ralph2010, Ralph:2012cd, Wallman2012}.
Both models assume that CTCs are only present in a limited portion of space-time. At some time before the CTCs, a chronology-respecting (CR) system is prepared. Then, the CR system interacts with a chronology-violating (CV) one, which travels along a CTC. The models prescribe how to calculate the state of the CR system obtained after interaction with the CV one. Within such frameworks, we can model the multi-region scenarios considered here by introducing a CR and a CV systems per region, and interpreting the interaction between each pair as our local operation in the corresponding local region. The CV systems then interact according to the unitary process and are later sent back in time, with the backward evolution described according to the specific model. We can then compare the evolution of the CV system predicted by each model.

As it turns out, the P-CTC model gives the same predictions as ours for any {\em valid\/} unitary process. The crucial difference is that the P-CTC model allows the CV system to evolve according to {\em arbitrary\/} unitaries, generically resulting in a {\em non-linear\/} evolution for the CR system and in a restriction on the local operations that can be performed. By contrast, our model imposes additional constraints, which effectively enforce the CR system to evolve {\em linearly}. 
The D-CTC model, on the other hand, allows arbitrary operations to be performed locally. However, it predicts {\em non-linear\/} evolution of the CR system, even when the CV system evolves according to a process subject to the constraints introduced here~\cite{Baumeler2016c}.
\section{Conclusions}

We developed a framework for deterministic dynamics in the presence of CTCs. The framework extends the ordinary concept of time evolution---where a future state is calculated as a function of a past one---to the more general scenario of a number of space-time regions, where the state in the past of each region is calculated as a function of the state in the future of all regions. Requiring that {\em arbitrary\/} operations must be possible in each region imposes strong constraints on the allowed dynamics. Our main result is that it is possible to have {\em reversible\/} dynamics, compatible with {\em arbitrary\/} local operations, where the state observed in each region {\em depends non-trivially on the states prepared in all other regions}. Because such a functional relation is reversible, it can always be realised by some physical system subject to local dynamical laws, {\it e.g.}, in terms of a system of bouncing billiard balls~\cite{Fredkin1982}. 

The main message of our result is that CTCs are not necessarily in conflict with local physics, nor with the `freedom of choice' associated with the possibility of performing arbitrary operations.
The latter furthermore implies that the choice of the operations together with the CTC {\em uniquely\/} determines the states on the past boundaries.
Importantly, and contrarily to several previous approaches~\cite{Deutsch:1991jo, Politzer1994, Pegg:2001wa, Bennett, Bacon2004, Svetlichny:2009ve,Svetlichny:2011gq, Lloyd:2011ir,  Ralph2010, Ralph:2012cd, Wallman2012, Allen2014}, quantum mechanics need not be invoked to `solve' paradoxes of classical time travelling. A quantum version of the framework can be developed within the so-called ``process matrix'' formalism~\cite{Oreshkov:2012uh}, where it would be natural (in analogy to classical determinism and reversibility) to impose unitarity of the process~\cite{araujo2016purification}. The precise connection between classical and quantum frameworks remains however an open question---we provided a brief discussion nevertheless---for example, it is unclear weather all classical, deterministic processes can be ``quantised'' to give a valid unitary quantum process.

	{\bf Acknowledgments.}
	We thank M.~Ara\'ujo, V.~Baumann, J.~Bowles, \v C.~Brukner, J.~Degorre, P.~Erker, A.~Feix, C.~Giarmatzi, A.~Hansen, A.~Montina, M.~Navascu\'es, O.~Oreshkov for helpful discussions, and anonymous reviewers for their comments.
	The present work was supported in part by the Swiss National Science Foundation (SNF), the National Centre of Competence in Research `Quantum Science and Technology' (QSIT), and the Templeton World Charity Foundation (TWCF 0064/AB38). \"A.B.\ acknowledges the Swiss National Science Foundation (SNSF) under grant 175860, the Erwin Schr\"odinger Center for Quantum Science \& Technology (ESQ), and the Austrian Science Fund (FWF): ZK 3. F.C.\ \& M.Z.\ acknowledge support through  Australian Research Council Discovery Early Career Researcher Awards (DE170100712 \& DE180101443). This publication was made possible through the support of a grant from the John Templeton Foundation. 
The opinions expressed in this publication are those of the authors and do not necessarily reflect the views of the John Templeton Foundation. We acknowledge the traditional owners of the land on which the University of Queensland is situated, the Turrbal and Jagera people.

\begin{appendix}

\section{Properties of process functions}
Here we derive a set of properties of process functions, which will be needed in later proofs. For convenience, we will use the term \emph{process function} to denote any function $w$ that satisfies condition~(\ref*{fixedpoint}) in the main text, namely that a fixed point of $w\circ f$ exists for each $f \in \mathcal{D}$, even though the equivalence between this condition and the main-text definition of process function, Eq.~(\ref*{consistency}), is due to Theorem~\ref*{fixedthm}, which is proved in the next section.

Let us first fix some notation. As in the main text, an object without index refers to a collection of objects: $\mathcal{I}= \mathcal{I}_1\times\dots\times\mathcal{I}_N$, {\it etc.} We will also use the notation $\mathcal{I}_{\setminus R}= \mathcal{I}_1\times\dots\times\mathcal{I}_{R-1}\times \mathcal{I}_{R+1}\times\dots\times\mathcal{I}_N$,  $i_{\setminus R}=\left\{i_1,\dots i_{R-1},i_{R+1},\dots,i_N\right\}$, {\it etc.}, to denote collections with the component $R$ removed. Appropriate reordering will be understood when joining variables, for example in expressions as $i=i_R\cup i_{\setminus R}$, $f(i)=f(i_R,i_{\setminus R})$, and so on.
Moreover, expressions like~$w_R(o)=w_R(o_{\setminus R})$ denote that region~$R$ {\em cannot\/} signal to itself, {\it i.e.},~$\forall o_{\setminus R},o_R,\tilde o_R: w_R(o_{\setminus R},o_R)=w_R(o_{\setminus R},\tilde o_R)$.

The first property we need is a necessary condition for process functions:
\begin{lemma}\label{constant}
For a process function $w$, each component $w_R: \mathcal{O}\rightarrow \mathcal{I}_R$ must be constant over $\mathcal{O}_R$: $w_R(o)=w_R(o_{\setminus R})$ for $R=1,\dots,N$.
\end{lemma}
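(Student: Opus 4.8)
The plan is to argue by contradiction: if $w_R$ genuinely depended on the output $o_R$ of its own region, I would build a ``grandfather-paradox'' choice of local operations for which $w\circ f$ has no fixed point, violating condition~\eqref{fixedpoint}. So suppose $w_R(o_R,o_{\setminus R})$ is not independent of $o_R$. Then there are $o_{\setminus R}\in\mathcal O_{\setminus R}$ and $o_R^0,o_R^1\in\mathcal O_R$ with $a_0:=w_R(o_R^0,o_{\setminus R})\neq w_R(o_R^1,o_{\setminus R})=:a_1$; note that $a_0,a_1$ are two genuine elements of $\mathcal I_R$ and that necessarily $o_R^0\neq o_R^1$.

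Next I would freeze all the other regions and install a two-element self-reference at $R$. For each $S\neq R$ let $f_S\in\mathcal D_S$ be the constant map with value $(o_{\setminus R})_S$, and define $f_R\in\mathcal D_R$ by $f_R(a_0):=o_R^1$ and $f_R(i_R):=o_R^0$ for all $i_R\neq a_0$. These are admissible: the model requires the $f$'s to be functions, not invertible ones, and such a piecewise-defined $f_R$ certainly lies in $\mathcal D_R$. Now, for any purported fixed point $i$ of $w\circ f$, looking only at its $R$-component gives $i_R=w_R\bigl(f_R(i_R),o_{\setminus R}\bigr)$, since $f_S(i_S)=(o_{\setminus R})_S$ for $S\neq R$. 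If $i_R=a_0$ the right-hand side is $w_R(o_R^1,o_{\setminus R})=a_1\neq a_0$; if $i_R\neq a_0$ it is $w_R(o_R^0,o_{\setminus R})=a_0\neq i_R$. In both cases the $R$-th equation fails, so $w\circ f$ has no fixed point, contradicting condition~\eqref{fixedpoint}. Hence $w_R(o)=w_R(o_{\setminus R})$.

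I do not expect a real obstacle here; the two points to keep in mind are that a single unsatisfiable component equation already destroys all joint fixed points---which is what licenses collapsing the $N$-partite problem to a one-dimensional self-reference by freezing the other regions at the ``witnessing'' output $o_{\setminus R}$---and that $f_R$ needs to be nothing more than an arbitrary map, exactly what the framework permits. An equivalent packaging would be to observe that fixing $o_{\setminus R}$ turns $w_R(\,\cdot\,,o_{\setminus R})$ into the component function of an effective single-region process and then invoke constancy of single-region process functions; but since that statement requires the same paradox argument, it seems cleanest to carry out the construction directly in the multipartite setting as above.
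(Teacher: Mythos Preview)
Your proof is correct and follows essentially the same approach as the paper's: both construct a two-valued ``grandfather paradox'' local operation at region $R$ that makes the $R$-th fixed-point equation unsolvable. Your version is slightly more direct in that you explicitly freeze the other regions with constant operations, whereas the paper phrases the argument for an arbitrary $f$ and then post-composes $f_R$ with an auxiliary liar map $g_R:\mathcal I_R\to\mathcal I_R$, but the core construction and logic are the same.
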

\begin{proof}
For some set of local operations $f=\left\{f_R:\mathcal{I}_R\rightarrow \mathcal{O}_R\right\}_{R=1}^N$ and some fixed $\bar{i}_{\setminus R}$, let us define $h_R:\mathcal{I}_R\rightarrow \mathcal{I}_R$ as $h_R(i_R):=w_R\circ f\left(i_R,\bar{i}_{\setminus R}\right)$. Let us assume that $h_R$ is not a constant, namely there exist $i^1_R$, $i^2_R$ such that $a^1_R=h_R(i^1_R)\not=h_R(i^2_R)= a^2_R$. We define then the function $g_R:\mathcal{I}_R\rightarrow \mathcal{I}_R$ as
\begin{align}
g_R(a^1_R)&=i^2_R\,, \\
g_R(i_R)&= i^1_R \quad \forall\, i_R \not= a^1_R\,.
\end{align}
It is then easy to see that $h_R\circ g_R$ has no fixed point. Indeed, $h_R\circ g_R (a^1_R)= h_R(i^2_R)=a^2_R\not=a^1_R$, while for any $i_R \not= a^1_R$, $h_R\circ g_R (i_R) = h_R(i^1_R) =a^1_R \not= i_R$. Thus, if $w_R\circ f$ is not a constant over $\mathcal{I}_R$, then $w_R\circ (f\circ g_R)$ has no fixed point and $w_R$ is not a component of a process function. As this must hold for every set of local operations $f$ and every $\bar{i}_{\setminus R}$, we conclude that each component $w_R$ of a process function must be a constant over $\mathcal{O}_R$.
\end{proof}
Lemma~\ref{constant} immediately implies a characterisation of single-region process functions:
\begin{corollary}\label{single}
	Given a function $w:\mathcal{O}\rightarrow\mathcal{I}$, $w\circ f$ admits a fixed point for every $f:\mathcal{I}\rightarrow\mathcal{O}$ ($w$ is a single-region process function) if and only if $w$ is a constant.
\end{corollary}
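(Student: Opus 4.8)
The plan is to derive both directions almost immediately, the nontrivial one as the $N=1$ instance of Lemma~\ref{constant}, and the other by direct inspection. Note first that the hypothesis ``$w\circ f$ admits a fixed point for every $f:\mathcal{I}\to\mathcal{O}$'' is exactly condition~\eqref{fixedpoint} in the single-region case, i.e.\ it says that $w$ is a process function with $N=1$.

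For the ``only if'' direction I would invoke Lemma~\ref{constant}: every component $w_R$ of a process function satisfies $w_R(o)=w_R(o_{\setminus R})$. With a single region $R$, the removed-component collection $o_{\setminus R}$ is empty, so this identity says that $w(o)$ takes the same value for all $o\in\mathcal{O}$, that is, $w$ is constant. (Alternatively one can re-run the argument of Lemma~\ref{constant} directly: if $w$ assumed two distinct values $a^1=w(o^1)\neq w(o^2)=a^2$, choose any $f:\mathcal{I}\to\mathcal{O}$ whose image contains $o^1$ and $o^2$, build the swap map $g$ as in that proof, and check that $w\circ(f\circ g)$ has no fixed point, contradicting the hypothesis.)

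For the ``if'' direction, suppose $w\equiv\bar{i}$ is constant. Then for any $f:\mathcal{I}\to\mathcal{O}$ and any $i\in\mathcal{I}$ we have $(w\circ f)(i)=w(f(i))=\bar{i}$, so $w\circ f$ is itself the constant map with value $\bar{i}$; in particular $(w\circ f)(\bar{i})=\bar{i}$, so $\bar{i}$ is a fixed point. Hence $w\circ f$ admits a fixed point for every $f$.

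I expect no real obstacle here: all the substance is carried by Lemma~\ref{constant}, and the corollary is its $N=1$ specialisation together with the elementary remark that post-composing with a constant yields a constant (which automatically has a fixed point). The only point deserving a line of care is the bookkeeping observation that, when there is a single region, $o_{\setminus R}$ is the empty collection, so ``$w_R$ is constant over $\mathcal{O}_R$'' is the same as ``$w$ is constant over all of $\mathcal{O}$''.
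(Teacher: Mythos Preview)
Your proof is correct and follows essentially the same approach as the paper: the ``only if'' direction is the $N=1$ case of Lemma~\ref{constant}, and the ``if'' direction is the direct observation that a constant $w$ makes $\bar{i}=w(o)$ a fixed point of $w\circ f$ for every $f$. Your added remark that $o_{\setminus R}$ is empty when $N=1$ is a helpful clarification but not a departure from the paper's argument.
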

\begin{proof}
As a consequence of Lemma~\ref{constant}, every single-partite process function must be a constant. On the other hand, for a constant function $w$, $i=w(o)$ is a fixed point of $w\circ f$ for every $f$, thus every constant $w$ is a process function.
\end{proof}

Intuitively, if we fix the operation performed in one of the $N$ regions, we should obtain a process for the remaining $N-1$ regions. This intuition will also play an important role in the proofs below. The first step to formalise such an intuition is the following definition:
\begin{definition}
Consider a function \mbox{$w:\mathcal{O}\rightarrow \mathcal{I}$}, such that, for each region $R=1,\dots,N$, \mbox{$w_R(o)=w_R(o_{\setminus R})$}. For a given local operation $f_R:\mathcal{I}_R\rightarrow \mathcal{O}_R$, we define the \textbf{reduced function} $w^{f_R}:\mathcal{O}_{\setminus R}\rightarrow \mathcal{I}_{\setminus R}$ on the remaining regions by composing $w$ with $f_R$:
\begin{equation}\label{reduced}	w^{f_R}_S\left(o_{\setminus R}\right):= w_S\left(o_{\setminus R},f_R\left(w_R\left(o_{\setminus R}\right)\right)\right)\,, \; S\not=R\,.
\end{equation}
\end{definition}

We will need the following fact:
\begin{lemma}\label{reducedpoint}
If $i\in \mathcal{I}$ is a fixed point of $w\circ f$, then $i_{\setminus R}$ is a fixed point of $w^{f_R}\circ f_{\setminus R}$.
\end{lemma}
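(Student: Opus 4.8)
\textbf{Proof proposal for Lemma~\ref{reducedpoint}.} The plan is to unwind both definitions and observe that the fixed-point equation for $w\circ f$ already contains, componentwise, the fixed-point equation for $w^{f_R}\circ f_{\setminus R}$. First I would fix notation: let $i$ be a fixed point of $w\circ f$ and set $o := f(i)$, so that $o_S = f_S(i_S)$ for every region $S$ and, because $i = w(f(i)) = w(o)$, we have $w_S(o) = i_S$ for every $S$. In particular $f_{\setminus R}(i_{\setminus R}) = o_{\setminus R}$, so the claim to be proven reduces to showing $w^{f_R}_S(o_{\setminus R}) = i_S$ for each $S \neq R$.

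The key step is to evaluate the inner expression in the definition~\eqref{reduced} of the reduced function. Since every component satisfies $w_R(o) = w_R(o_{\setminus R})$ by hypothesis, I would compute $w_R(o_{\setminus R}) = w_R(o) = i_R$, using the fixed-point property for the $R$-th component. Hence $f_R(w_R(o_{\setminus R})) = f_R(i_R) = o_R$, which is precisely the missing $R$-th output. Substituting this back into~\eqref{reduced} gives $w^{f_R}_S(o_{\setminus R}) = w_S(o_{\setminus R}, o_R) = w_S(o) = i_S$, again by the fixed-point property, now for the $S$-th component. This holds for all $S \neq R$, so $i_{\setminus R}$ is a fixed point of $w^{f_R}\circ f_{\setminus R}$.

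There is essentially no obstacle here: the argument is a one-line substitution once the notation is set up. The only thing requiring minor care is the bookkeeping of the ``$\setminus R$'' collections and the implicit reordering when reassembling $o = o_{\setminus R}\cup o_R$ from its pieces, together with the legitimate use of the assumption that $w_R$ does not depend on $o_R$ (which is what makes $w^{f_R}$ well defined in the first place, and which holds for process functions by Lemma~\ref{constant}). I would state these conventions explicitly at the start and then present the chain of equalities above.
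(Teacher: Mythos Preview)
Your proposal is correct and follows essentially the same argument as the paper: both use the fixed-point equation at the $R$-th component to reduce $f_R(w_R(o_{\setminus R}))$ to $o_R$, and then the fixed-point equation at each $S\neq R$ to conclude. The only difference is cosmetic---you introduce the shorthand $o:=f(i)$ and spell out the bookkeeping more explicitly than the paper does.
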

\begin{proof}
Since $i$ is a fixed point of $w\circ f$, we have $i_R = w_R(f_{\setminus R}(i_{\setminus R}))$. Then, for $S\not = R$, Eq.~\eqref{reduced} implies
\begin{multline}
	w_S^{f_R}\circ f_{\setminus R}\left(i_{\setminus R} \right)  = w_S\left(f_{\setminus R}(i_{\setminus R}), f_R(i_R)\right)\\
	=w_S\circ f(i)
= i_S.
\end{multline}
\end{proof}
We can now prove two crucial properties.
\begin{lemma} \label{lemma}
	Given a function \mbox{$w:\mathcal{O}\rightarrow \mathcal{I}$}, such that, for each region $R=1,\dots,N$, \mbox{$w_R(o)=w_R(o_{\setminus R})$}, we have
\begin{enumerate}[(i)]
	\item If $w$ is a process function, then $w^{f_R}$ is also a process function for every region $R$ and operation $f_R$.
	\item If there exists a region $R$ such that, for every local operation $f_R$, $w^{f_R}$ is a process function, then $w$ is also a process function.
\end{enumerate}
\end{lemma}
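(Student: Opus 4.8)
The plan is to treat the two directions separately; in both cases the essential engine is Lemma~\ref{reducedpoint} together with the definition~\eqref{reduced} of the reduced function, and nothing beyond careful bookkeeping of the split-and-join notation $i = i_R \cup i_{\setminus R}$ is needed. Throughout, the standing hypothesis $w_R(o)=w_R(o_{\setminus R})$ is what makes every reduced function $w^{f_R}$ well-defined.

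For direction~(i), I would fix a region $R$ and an operation $f_R\in\mathcal{D}_R$, and verify the defining property of a process function for $w^{f_R}$ directly: given an arbitrary collection $f_{\setminus R}\in\mathcal{D}_{\setminus R}$ of operations on the remaining regions, form the full operation $f = f_R \cup f_{\setminus R}\in\mathcal{D}$. Since $w$ is a process function, $w\circ f$ has a fixed point $i\in\mathcal{I}$, and Lemma~\ref{reducedpoint} then states precisely that $i_{\setminus R}$ is a fixed point of $w^{f_R}\circ f_{\setminus R}$. As $f_{\setminus R}$ was arbitrary, this shows $w^{f_R}$ satisfies condition~\eqref{fixedpoint}, i.e.\ is a process function.

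For direction~(ii), let $R$ be the region supplied by the hypothesis, and take an arbitrary $f\in\mathcal{D}$, split as $f = f_R\cup f_{\setminus R}$. By hypothesis $w^{f_R}$ is a process function, so there is a fixed point $\bar{i}_{\setminus R}$ of $w^{f_R}\circ f_{\setminus R}$; the task is to lift it to a fixed point of $w\circ f$. I would set $\bar{i}_R := w_R\bigl(f_{\setminus R}(\bar{i}_{\setminus R})\bigr)$ — well-defined because $w_R$ does not depend on $o_R$ — and put $\bar{i} := \bar{i}_R\cup\bar{i}_{\setminus R}$. The $R$-component of the fixed-point equation then holds by construction: $w_R(f(\bar{i})) = w_R\bigl(f_{\setminus R}(\bar{i}_{\setminus R})\bigr) = \bar{i}_R$, using $o_R$-independence of $w_R$ again. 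For $S\neq R$, unfolding~\eqref{reduced} with $o_{\setminus R} = f_{\setminus R}(\bar{i}_{\setminus R})$ gives $w^{f_R}_S\bigl(f_{\setminus R}(\bar{i}_{\setminus R})\bigr) = w_S\bigl(f_{\setminus R}(\bar{i}_{\setminus R}),\, f_R(\bar{i}_R)\bigr) = w_S(f(\bar{i}))$, and the left-hand side equals $\bar{i}_S$ since $\bar{i}_{\setminus R}$ is a fixed point of $w^{f_R}\circ f_{\setminus R}$. Hence $w_S(f(\bar{i})) = \bar{i}_S$ for every $S$, so $\bar{i}$ is a fixed point of $w\circ f$; as $f$ was arbitrary, $w$ is a process function.

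The only mild subtlety I anticipate is the lift in~(ii): one must choose $\bar{i}_R$ to be exactly the value forced by $w_R$, so that the $R$-th fixed-point equation becomes automatic, and this is precisely where the assumption $w_R(o)=w_R(o_{\setminus R})$ is indispensable — without it the reduced function would not even be defined and the lift would be ambiguous. Everything else is routine substitution into the definitions of the reduced function and of a fixed point.
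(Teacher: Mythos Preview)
Your proposal is correct and follows essentially the same argument as the paper: part~(i) is an immediate application of Lemma~\ref{reducedpoint}, and part~(ii) proceeds by taking a fixed point $\bar{i}_{\setminus R}$ of the reduced function and lifting it via $\bar{i}_R := w_R\bigl(f_{\setminus R}(\bar{i}_{\setminus R})\bigr)$, exactly as the paper does (with $R=1$). Your added remark that the $o_R$-independence of $w_R$ is what makes both the reduced function and the lift well-defined is accurate and worth keeping.
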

\begin{proof}
Point \emph{(i)} is a direct consequence of Lemma~\ref{reducedpoint}: For every set of operations $f_{\setminus R}$, a fixed point of $w_S^{f_R}\circ f_{\setminus R}$ is given by $i_{\setminus R}$, where $i$ is a fixed point of $f=f_R\cup f_{\setminus R}$. 

To prove \emph{(ii)}, we can set $R=1$ without loss of generality. We then have to prove that, if $w^{f_1}$ is a process function for every $f_1$, it follows that $w$ is also a process function, \ie, we have to find a fixed point of $w\circ f$ for an arbitrary $f$.
As the reduced function $w^{f_1}$ is a process function by assumption, there exists a fixed point $i_{\setminus 1}$ of $w^{f_1}\circ f_{\setminus 1}$. Choosing $i_1=  w_1 \circ f_{\setminus 1}\left(i_{\setminus 1}\right)$ as input state for region $1$, we see that $i:= i_1 \cup i_{\setminus 1}$ is a fixed point of $w\circ f$. Indeed this is true, by definition of $i_1$, for the component $w_1$. For $S>1$, the definition of reduced function, Eq.~\eqref{reduced}, gives
\begin{multline*}
	w_S\circ f \left(i\right) = w_S\left(f_1\left(i_1\right),f_{\setminus 1}\left(i_{\setminus 1}\right)\right)
	\\ = w^{f_1}_S\circ f_{\setminus 1}\left(i_{\setminus 1}\right) = i_{S}
	\,,
\end{multline*}
where in the last equality we used the fact that $i_{\setminus 1}$ is the fixed point of $w^{f_1}_S\circ f_{\setminus 1}$.
\end{proof}

\section{Uniqueness of the fixed point}\label{uniquefp}
Here we prove Theorem~\ref*{fixedthm}, namely the following $N$-dependent proposition.

\emph{$P[N]$: Let $w:\mathcal{O}\rightarrow \mathcal{I}$ be such that, for every collection of functions $f=\{f_1,\dots,f_N\}$, $f_R: \mathcal{I}_R\rightarrow \mathcal{O}_R$, there exists at least one fixed point $i\in \mathcal{I}$, $w\circ f(i)=i$. Then, the fixed point is unique for each $f$.
}

We prove this by induction, namely we first prove $P[1]$ and then the implication $P[N-1]\Rightarrow P[N]$ for $N>1$.

\begin{proof}
$P[1]$ is a simple consequence of Corollary~\ref{single}: $w\circ f$ can have a fixed point for every $f$ only if $w$ is constant, $w(o)=\bar{i}$ for every $o$. Then, $\bar{i}$ is the unique fixed point of $w\circ f$ for every $f$.

For $N>1$, let us assume $P[N]$ is false and let $a=\left\{a_1,\dots,a_N\right\}$, $b=\left\{b_1,\dots,b_N\right\}$ be two distinct fixed points of $w\circ f$, where $w$ is an $N$-partite process function. Without loss of generality, we can assume that they differ in the first component, $a_1\not= b_1$. This means that the reduced function $w^{f_N}\circ f_{\setminus N}$ has two distinct fixed point, $a_{\setminus N}\not=b_{\setminus N}$. But this is in contradiction with $P[N-1]$, because, according to point \textit{(i)} of Lemma~\ref{lemma}, $w^{f_N}$ is an $N-1$-partite process function and thus has a single fixed point for each $f_{\setminus N}$. This concludes the proof. 
\end{proof}

\section{Reversibility}\label{reversibility}
Here we prove that every process function can be extended to an invertible process function (Theorem~\ref*{thm:go}).

\begin{proof}

Given a process function $w:\mathcal{O}\rightarrow \mathcal{I}$ over $N$ local regions, we add two extra regions, $\source$ and $\sink$, a source and a sink, respectively.	We take the output space of the source to be isomorphic with the entire input space of the~$N$ regions, $e\in\mathcal{O}_{\source}\cong \mathcal{I}$, while the input space of the sink is isomorphic to the output space of the regions, $s\in\mathcal{I}_{\sink}\cong \mathcal{O}$. For each $R=1,\dots,N$ and each $e_R\in\mathcal{I}_R$, we introduce a function $T^{e_R}_R:\mathcal{I}_R\rightarrow \mathcal{I}_R$ such that there exists $\tilde{e}_R$ for which $T^{\tilde{e}_R}_R(i_R)=i_R$ and, for each $i_R\in \mathcal{I}_{R}$, $T^{(\cdot)}_R(i_R)$ is invertible. For a state space with a linear structure, we can take $T^{e_R}_R(i_R)=i_R+e_R$ for concreteness (although the proof does not rely on this).\footnote{If $\left|\mathcal{I}_R\right|=c_R<\infty$, we can use $T^{e_R}_R=i_R \oplus e_R$, where $\oplus$ is addition modulo $c_R$.} 
We then extend the process function~$w:\mathcal{O}\rightarrow \mathcal{I}$ to a function~$w'\equiv (w^1,w^2):\mathcal{O}\times\mathcal{O}_{\source}\rightarrow\mathcal{I}\times\mathcal{I}_{\sink}$, defined as
\begin{align}
	 w^1_R&(o,e)= T^{e_R}_R\circ w_R(o) = w_R(o)+e_R  \\
	 w^2_R&(o,e)=o_R.
	\label{extension}
\end{align}

The function $w'$ is invertible, with the inverse given by $i_R=w_{R}(s)-e_R$, $o_R=s_R$. To show that it is a process function, we have to prove that its composition with arbitrary local operations has a fixed point, condition~(\ref*{fixedpoint}) in the main text. Note that this condition is equivalent to the existence of output fixed points: $f'\circ w'(o')=o'$, $o'\equiv (o,e)$. Since local operations for $\source$ are functions $\varnothing\rightarrow \mathcal{O}_{\source}$, where~$\varnothing$ is the empty set, they can be identified with a state $f(\varnothing)\equiv e\in \mathcal{O}_{\source}$, interpreted as `state preparation.' The fixed-point condition for the source components is then trivially satisfied by any $e\in \mathcal{O}_{\source}$. As the sink has no output space, the fixed-point condition reduces to 
\begin{align}
	o_R = f_R\circ T_R^{e_R}\circ w_R(o),
	\label{extension2}
\end{align}
which should be satisfied for every $f\in \mathcal{D}$ and $e\in \mathcal{O}_{\source}$. This is true because $f_R\circ T_R^{e_R}$ is a local operation and, as $w$ is a process function, a fixed point $o\in \mathcal{O}$ exists for every local operation.
\end{proof}

\section{Characterizations}\label{characterizations}
Here we prove the characterisations of process functions for up to three local regions. The single-region characterisation is given by Corollary~\ref{single}: All and only constants are single-region process functions.

\subsection{Two regions}
We relabel input and output of region $R$ as \mbox{$i_R\rightarrow a\in \mathcal{A}$}, \mbox{$o_R\rightarrow x\in \mathcal{X}$}, respectively, and inputs and outputs of region $S$ as $i_S\rightarrow b\in \mathcal{B}$, $o_S\rightarrow y\in \mathcal{Y}$. For a bipartite process function $w=\left\{w_R,w_S\right\}$, the single-party characterisation implies that $w_R(x,y)=w_R(y)$, $w_S(x,y)=w_S(x)$. It is furthermore clear that, if at least one of the two components of a function $w=\left\{w_R,w_S\right\}$ is a constant, then $w$ is a process function. (Given $w_R(y)=a_0$, the fixed point $i\equiv\left\{a,b\right\}$ is given by $a=a_0$, $b=w_S(f_R(a_0))$.) 

It remains to prove that if $w$ is a process function, then at least one of the two components is constant.
\begin{proof}
The consistency condition~(\ref*{fixedpoint}) in the main text says that, for every local operation $f_R$, $f_S$, there exists $a\in \mathcal{A}$, $b\in \mathcal{B}$ such that 
\begin{align}
	w_R\left(f_S(b)\right)=a\,, \qquad w_S\left(f_R(a)\right)=b\,.
\end{align}
By plugging the second equation into the first we obtain
\begin{equation}
	w_R\circ f_S\circ w_S\circ f_R(a)=a\,.
\end{equation}
The single-party characterisation tells us that $w_R\circ f_S\circ w_S$ must be a constant, and this must be true for all $f_S$. This is only possible if one of the two functions, $w_R$, $w_S$, is constant.
\end{proof}

\subsection{Three regions}
We prove here Theorem~\ref*{3charact}, which characterises tripartite process functions as those with one-way conditional signalling.
\begin{proof}
As in the main text, we consider three regions $R$, $S$, $T$, with input states $a\in \mathcal{A}$, $b\in \mathcal{B}$, $c\in \mathcal{C}$ and output states $x\in \mathcal{X}$, $y\in \mathcal{Y}$, $z\in \mathcal{Z}$.
The function $w^z:\mathcal{X}\times \mathcal{Y}\rightarrow \mathcal{A}\times \mathcal{B}$, defined as $w^z(x,y):=\left\{w_{R}(y,z),w_{S}(x,z)\right\}$, is the reduced function obtained from $w$ by fixing the local operation of $T$ to be the constant function with output $z$. If $w$ is a process function, by point \emph{(i)} of Lemma~\ref{lemma}, $w^z$ is a bipartite process function, and similarly for the analogously defined $w^x$ and $w^y$. This proves the `easy' direction of the theorem.

We want to prove the converse: If $w^x$, $w^y$, and $w^z$ are bipartite process functions for arbitrary $x$, $y$, $z$, then $w$ is also a process function. Note that we cannot apply point \emph{(ii)} of Lemma~\ref{lemma} directly, because that requires knowing that the reduced function is a process function for an {\em arbitrary\/} local operation, not just for the constant operation.

By assumption, we know that $w^z$ is a bipartite process function. This means that, for every given $z$, $w^z$ is one-way signalling. In other words, at least one of the two components, $w_R^z\equiv w_{R}(\cdot,z)$ or $w_S^z\equiv w_S(\cdot,z)$, must be a constant. We denote $\mathcal{Z}_R\subset \mathcal{Z}$ the subset of outputs of $T$ for which $w_R^z$ is constant, and $\mathcal{Z}_S\subset \mathcal{Z}$ the subset for which $w_S^z$ is constant. Because at least one of the two components is constant, we have $\mathcal{Z}= \mathcal{Z}_R\cup \mathcal{Z}_S$. (The two subsets can have non-null intersection.) In a similar way, we write $\mathcal{X}= \mathcal{X}_S\cup \mathcal{X}_T$ and $\mathcal{Y}= \mathcal{Y}_R\cup \mathcal{Y}_T$, where $w^x_S$ is constant for $x\in\mathcal{X}_S$, and so on. Thus we have
\begin{align}
	w_R(y_R,z) &= w_R(y,z_R) = a_0 \\ \nonumber
	 \forall&\, y_R \in \mathcal{Y}_R, y \in \mathcal{Y}, z_R \in \mathcal{Z}_R, z \in \mathcal{Z}\,; \\
	w_S(x_S,z) &= w_S(x,z_S) = b_0 \\ \nonumber
	 \forall&\, x_S \in \mathcal{X}_S, x \in \mathcal{X}, z_S \in \mathcal{Z}_S, z \in \mathcal{Z}\,; \\
	w_T(x_T,y) &= w_T(x,y_T) = c_0 \\ \nonumber
	 \forall&\, x_T \in \mathcal{X}_T, y \in \mathcal{Y}, y_T \in \mathcal{Y}_T, x \in \mathcal{X}\,.
\end{align}
Now consider an arbitrary local operation $f_R:\mathcal{A}\rightarrow\mathcal{X}$. We want to show that the reduced function $w^{f_R}$, defined as in Eq.~\eqref{reduced}, is a bipartite process function. To this end, we need to prove:
\begin{enumerate}[(i)]
	\item $w_S^{f_R}(y,z)= w_S^{f_R}(z)\,$,
	\item $w_T^{f_R}(y,z)= w_T^{f_R}(y)\,$,
	\item At least one of the two components is constant.
\end{enumerate}
Let us start with point (i). For $z_S\in \mathcal{Z}_S$, we have $w_S^{f_R}(y,z_S)=b_0$, independently of $y$. Let us then consider $z_R\in \mathcal{Z}_R$. By definition, $w_S^{f_R}(y,z_R)= w_S\left(f_R\circ w_R(y,z_R),z_R\right)$. But $w_R(y,z_R)=a_0$ for $z_R\in \mathcal{Z}_R$, thus $w_S^{f_R}(y,z_R)= w_S\left(f_R(a_0),z_R\right)$, which is again independent of $y$. By a similar argument, $w_T^{f_R}(y,z)$ is independent of $z$.

We are thus left with proving point (iii). We shall prove the equivalent implications
\begin{align*}
w_S^{f_R}\; \textrm{not constant} \Rightarrow w_T^{f_R}\; \textrm{constant}, \\
w_T^{f_R} \textrm{not constant} \Rightarrow w_S^{f_R}\; \textrm{constant}.
\end{align*}
Say that $w_S^{f_R}$ is not constant. Then, there is a $z_R\in \mathcal{Z}_R$ such that $w_S^{f_R}(z_R)\not = b_0$. As we have seen above, for $z_R\in \mathcal{Z}_R$, $w_S^{f_R}(z_R)= w_S\left(f_R(a_0),z_R\right)$, so we need $f_R(a_0)\notin \mathcal{X}_S$ to have $w_S^{f_R}(z_R)\not = b_0$. This means that $f_R(a_0)\in \mathcal{X}_T$. But then, for $y_R\in \mathcal{Y}_R$ we have $w_T^{f_R}(y_R)= w_T\left(f_R(a_0),y_R\right)=c_0$, and also $w_T^{f_R}(y_T)=c_0$ for $y_T\in \mathcal{Y}_T$. This means that $w_T^{f_R}$ is constant. 

To recapitulate, we have proven that, if $w^x$, $w^y$, $w^z$ are bipartite process functions for arbitrary $x\in \mathcal{X}$, $y\in \mathcal{Y}$, $z\in \mathcal{Z}$ (as per hypothesis), then $w^{f_R}$ is a bipartite process function for an arbitrary operation $f_R$. Point \emph{(ii)} of Lemma~\ref{lemma} finally implies that $w$ is a process function, concluding the proof.
\end{proof}

\end{appendix}

\bibliography{refs}

\end{document}